\newcommand{\blu}{\color{blue}}
\newcommand{\be}{\begin{equation}}
\newcommand{\ee}{\end{equation}}
\newcommand{\bea}{\begin{eqnarray}}
\newcommand{\eea}{\end{eqnarray}}
\newcommand{\bes}{\begin{subequations}}
\newcommand{\ese}{\end{subequations}}
\newtheorem{prop}{Proposition}[section]
\newlist{inlineroman}{enumerate*}{1}
\setlist[inlineroman]{itemjoin*={{, and }},afterlabel=~,label=\roman*.}
\newlist{Inlineroman}{enumerate*}{1}
\setlist[Inlineroman]{itemjoin*={{, and }},afterlabel=~,label=\Roman*.}
\begin{document}
\title{The Volume of Two-Qubit States by Information Geometry}
\author{Milajiguli Rexiti}
\affiliation{School of Advanced Studies, University of Camerino, 62032 Camerino, Italy}
\author{Domenico Felice}
\affiliation{Max Planck Institute for Mathematics in the Sciences, {Inselstrasse 22-04103} Leipzig, Germany;}
\author{Stefano Mancini}
\affiliation{School of Sience and Technology, University of Camerino, 62032 Camerino, Italy\\
INFN-Sezione di Perugia, I-06123 Perugia, Italy}

\begin{abstract}{Using the information geometry approach, we determine the volume of the set of two-qubit states with maximally disordered subsystems. Particular attention is devoted to the behavior of the volume of sub-manifolds of separable and entangled states with fixed purity. 
We show that the usage of the classical Fisher metric on phase space probability representation of quantum states gives the same qualitative results with respect to different versions of the quantum Fisher metric. 
}
\end{abstract}

\maketitle

\section{Introduction}

The volume of sets of quantum states is an issue of the utmost importance. It can help in finding separable states within all quantum states. The former are states of a composite system that can be written as convex combinations of subsystem states, in contrast to entangled states \cite{Horo}. The notion of volume can also be useful when defining ``typical'' properties of a set of quantum states. In fact, in such a case, one uses the random generation of states according to a measure stemming from their volume \cite{Lupo}.

Clearly, the seminal thing to do to determine the volume of a set of quantum states is to consider a metric on it. For pure quantum states, it is natural to consider the Fubini--Study metric which turns out to be proportional to the classical Fisher--Rao metric \cite{Marmo1}. However, the situation becomes ambiguous for quantum mixed states where there is no single metric \cite{Petz}. 
There, several measures have been investigated, each of them arising from different motivations \cite{GQ}. For instance, one employed the Positive Partial Transpose criterion \cite{Horo} to determine an upper bound for the volume of separable quantum states and to figure out that it is non-zero \cite{Zycz}. 
{However, this criterion becomes less and less precise as the dimension increases \cite{Ye09}. 
An upper bound for the volume of separable states is also provided in \cite{Szarek} by
combining techniques of geometry and random
matrix theory. Here, separable states are approximated by an ellipsoid with respect to a 
Euclidean product.
Following up, in Reference \cite{Aubrun}, a measure relying on the convex and Euclidean structure of states has been proposed. Such a volume measure turned out to be very effective in detecting separability for large systems. Nevertheless, for two-qubit states, the Euclidean structure is not given by 
the Hilbert--Schmidt scalar product. 
Instead, the latter is considered as a very natural measure upon the finite dimensional Hilbert spaces. For this reason, it has been employed as a natural measure in the space of density operators \cite{Zycz}. Such a measure revealed that the set of separable states has a non-zero volume and, in some cases, analytical lower and upper bounds have been found on it. 
Still based on the Hilbert--Schmidt product, a measure has been recently introduced to evaluate the volume of Gaussian states in infinite dimensional quantum systems \cite{LS15}. 
However, it does not have a classical counterpart, as an approach based on the probabilistic structure of quantum states could have.
This approach has been taken for infinite dimensional quantum systems by using information geometry \cite{FQM17}.} This is the application of differential geometric techniques to the study of families of probabilities \cite{Amari}. Thus, it can be employed whenever quantum states are represented as probability distribution functions, instead of density operators. 
{This happens in phase space where Wigner functions representing Gaussian states were used in such a way that each class of states is associated with a statistical model which turns out to be a Riemannian manifold endowed with the well-known Fisher--Rao metric \cite{FQM17}.}

This approach could be extended to the entire set of quantum states considering the Husimi $Q$-function \cite{Husimi} instead of the Wigner function, being the former a true probability distribution function.
Additionally, it could be used in finite dimensional systems by using coherent states of compact groups $SU(d)$.
Here, we take this avenue for two-qubit systems.
Above all, we address the question of whether such an approach gives results similar to other approaches based on the quantum version of the Fisher metric, such as the Helstrom quantum Fisher metric \cite{Helstrom} and the Wigner-Yanase-like quantum Fisher metric \cite{Luo}.
We focus on states with maximally disordered subsystems and analyze the behavior of the volume of sub-manifolds of separable and entangled states with fixed purity. We show that all of the above mentioned approaches give the same qualitative results. 

The layout of the paper is as follows. In Section \ref{sec2}, we recall the structure of a set of two-qubit states.
Then, in Section \ref{sec3}, we present different metrics on such a space inspired by the classical Fisher--Rao metric. Section \ref{sec4} is devoted to the evaluation of the volume of states with maximally disordered subsystems.
Finally, we draw our conclusions in Section \ref{sec5}.


\section{Structure of a Set of Two-Qubit States}\label{sec2}

Consider two-qubit states (labeled by 1 and 2) with associated Hilbert space
${\cal H}=\mathbb{C}^2\otimes\mathbb{C}^2$. 
The space ${\cal L}({\cal H})$ of linear operators acting on ${\cal H}$ can be supplied with a scalar product $\langle A,B\rangle={\rm Tr}\left( A^\dag B\right)$ to have a Hilbert--Schmidt space.
Then, in such a space, an arbitrary density operator (positive and trace class operator)
can be represented as follows
\begin{equation}
\rho=\frac{1}{4}\left(I\otimes I+{\boldsymbol r}\cdot{\boldsymbol\sigma}\otimes I+I\otimes
{\boldsymbol s}\cdot {\boldsymbol\sigma}+
\sum_{m,n=1}^3t_{mm}\sigma_{m}\otimes\sigma_{n}\right),
\label{T}
\end{equation}
where $I$ is the identity operator on $\mathbb{C}^2$, $\boldsymbol{r,s}\in\mathbb{R}^3$, and 
${\boldsymbol\sigma}:=(\sigma_1,\sigma_2,\sigma_3)$ is the vector of Pauli matrices. Furthermore,
the coefficients $t_{mn}:={\rm Tr}\left( \rho \, \sigma_m \otimes \sigma_n\right)$ form a real matrix denoted by $T$. 

Note that $\boldsymbol{r, s}$ are local parameters as they determine the reduced states
\begin{eqnarray}
\rho_1&:=&{\rm Tr}_{2} \rho= \frac{1}{2} \left(I+ {\boldsymbol r} \cdot  {\boldsymbol\sigma} \right), \\
\rho_2&:=&{\rm Tr}_{1} \rho= \frac{1}{2} \left(I+ {\boldsymbol s} \cdot  {\boldsymbol\sigma} \right),
\end{eqnarray}
while the $T$ matrix is responsible for correlations.

The number of (real) parameters characterizing $\rho$ is 15, but they can be reduced 
with the following argument. Entanglement (in fact, any quantifier of it) is invariant under
local unitary transformations, i.e., $U_1\otimes U_2$ \cite{Horo}.
Then, without loss of generality,
we can restrict our attention to states with a diagonal matrix $T$. 
To show that this class of states is representative, we can use the following fact: 
if a state is subjected to $U_1\otimes U_2$ transformation,
 the parameters $\boldsymbol{r,s}$ and  $T$ transform themselves as 
\begin{eqnarray}
&&\boldsymbol{r}'=O_1\boldsymbol{r},\\
&&\boldsymbol{s}'=O_2\boldsymbol{s},\\
&&T'=O_1 T O_2^\dag { ,}\label{TOTO}
\end{eqnarray}
where $O_i$s correspond to $U_i$s via 
\begin{equation}
U_i\boldsymbol{ \hat n}\cdot\boldsymbol{\sigma}U_i^\dagger
=(O_i\boldsymbol{\hat n})\cdot\boldsymbol{\sigma}{ ,}
\label{UnsU}
\end{equation}
being $\boldsymbol{ \hat n}$ a versor of $\mathbb{R}^3$.
Thus, given an arbitrary state, we can always choose unitaries $U_1, U_2$ such that the 
corresponding rotations will diagonalize its matrix $T$.
 
As we consider the states with diagonal $T$, we can identify $T$
with the vector $\boldsymbol{t}:=(t_{11},t_{22},t_{33})\in \mathbb{R}^3$. 
Then, the following sufficient conditions are known \cite{Horodecki}:
\begin{enumerate}
\item[i)]
\emph{{For any $\rho$, the matrix $T$ belongs 
to the tetrahedron $\cal T$ with vertices $(-1,-1,-1)$,  
$(-1,1,1)$, $(1,-1,1)$, $(1,1,-1)$.}}
\item[ii)]
\emph{{For any separable state $\rho$, the matrix $T$ belongs to the octahedron 
$\cal O$ with vertices $(0,0,\pm1)$, $(0,\pm1,0)$,
$(0,0,\pm1)$.}}
\end{enumerate}

Still, the number of (real) parameters \eqref{rhotmatrix} characterizing $\rho$ is quite large.
Thus, to simplify the treatment, from now on, we focus on the states with maximally disordered subsystems, namely the states 
with $\boldsymbol{r,s}=0$.
They are solely characterized by the $T$ matrix.
For these states, the following necessary and sufficient conditions are known \cite{Horodecki}:
\begin{enumerate}
\item[i)]
\emph{
Any operator \eqref{T} with $\boldsymbol{r,s}=0$ and diagonal
{$T$ is a state (density operator) iff $T$ belongs to the tetrahedron $\cal T$.}}
\item[ii)]
\emph{{
Any state $\rho$ with maximally disordered subsystems and diagonal
$T$ is separable iff $T$ belongs to the octahedron $\cal O$.}}
\end{enumerate}


\section{Fisher Metrics}\label{sec3}

The generic state with maximally disordered subsystems parametrized by ${\boldsymbol{t}}$  reads
\begin{equation}
\label{rhot}
\rho_{\boldsymbol{t}}=\frac{1}{4}\sum_{n=1}^3t_{nn} \sigma_n\otimes\sigma_n, 
\end{equation}
and has a matrix representation (in the canonical basis):
\begin{equation}
\label{rhotmatrix}
\rho_{\boldsymbol t}=\left(
\begin{array}{cccc}
 \frac{t_{33}+1}{4} & 0 & 0 & \frac{t_{11}-t_{22}}{4} \\
 0 & \frac{1-t_{33}}{4} & \frac{t_{11}+t_{22}}{4} & 0 \\
 0 & \frac{t_{11}+t_{22}}{4} & \frac{1-t_{33}}{4} & 0 \\
 \frac{t_{11}-t_{22}}{4} & 0 & 0 & \frac{t_{33}+1}{4} \\
\end{array}
\right).
\end{equation}
We shall derive different Fisher metrics for the set of such states.

\subsection{Classical Fisher Metric in Phase Space}\label{subsec31}

The state \eqref{rhot} can also be represented in the phase space by means of the Husimi $Q$-function \cite{Husimi}, 
\begin{equation}
Q_{\boldsymbol t}\left( {\boldsymbol x}\right):=\frac{1}{4\pi^2}\langle \Omega_1| \langle \Omega_2| \rho_{\boldsymbol t} | \Omega_2\rangle |\Omega_1\rangle, 
\label{p}
\end{equation}
where 
\begin{eqnarray}
|\Omega_1\rangle&=&\cos {\frac{\theta_1}{2}}  | 0 \rangle+ e^{-i \phi_1} \sin{\frac{\theta_1}{2}}|1\rangle,
\\
|\Omega_2\rangle&=&\cos {\frac{\theta_2}{2}}  | 0 \rangle + e^{-i \phi_2} \sin{\frac{\theta_2}{2}}|1\rangle,
\end{eqnarray}
are $SU(2)$ coherent states \cite{KS}. Here, ${\boldsymbol x}:=(\theta_1,\theta_2,\phi_1,\phi_2)$ is the vector of random variables $\theta_{1,2}\in[0,\pi)$, $\phi_{1,2}\in[0,2\pi]${ ,} with probability measure 
$d{\boldsymbol x}=\frac{1}{16\pi^2}\sin\theta_1\sin\theta_2 d\theta_1d\theta_2 d\phi_1d\phi_2$.

Recall that the classical Fisher--Rao information metric for probability distribution functions $Q_{\boldsymbol t}$ is given by:
\begin{equation}
g^{FR}_{ij}:=\int Q_{\boldsymbol t}\left( {\boldsymbol x}\right) \partial_i \log  Q_{\boldsymbol t}\left( {\boldsymbol x}\right) \partial_j \log  Q_{\boldsymbol t}\left( {\boldsymbol x}\right) d {\boldsymbol x},
\label{gFisher}
\end{equation}
where $\partial_i :=\frac{\partial}{\partial t_{ii}}$. 

Considering \eqref{rhot} and \eqref{p}, we explicitly get
\begin{eqnarray}
Q_{\boldsymbol t}\left(\theta_1, \theta_2, \phi_1, \phi_2 \right)
&=&\frac{2\sin\theta_1 \sin \theta_2  }{64 \pi ^2} \Big[t_{11} \left(\cos(\phi_1+\phi_2)+\cos(\phi_1-\phi_2)\right) 
-t_{22} \left(\cos(\phi_1+\phi_2)-\cos(\phi_1-\phi_2)\right)  \Big] \notag \\
&+&\frac{4}{64 \pi ^2}\left[t_{33} \cos \theta_1 \cos \theta_2+1 \right].
\end{eqnarray}
Unfortunately, it is not possible to arrive at an analytical expression for  $g_{ij}$.
In fact, the integral \eqref{gFisher} can only be evaluated numerically.

Let us, however, analyze an important property of this metric.

\begin{prop}\label{prop1}
The metric \eqref{gFisher} is invariant under local rotations.
\end{prop}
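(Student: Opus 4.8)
The plan is to show that the probability distribution $Q_{\boldsymbol t}$ is "covariant" under local rotations in the sense that a local unitary $U_1\otimes U_2$ — equivalently the pair of $SO(3)$ rotations $(O_1,O_2)$ via \eqref{UnsU} — acts on the parameter $\boldsymbol t$ by a permutation/sign change while simultaneously being absorbed by a reparametrization of the phase-space variables $\boldsymbol x$. Since the Fisher–Rao metric \eqref{gFisher} is built intrinsically from $Q_{\boldsymbol t}$ and an integration against the invariant measure $d\boldsymbol x$, such a covariance forces $g^{FR}$ to transform as a tensor under the induced map on $\boldsymbol t$-space, and in particular its components are invariant when that map is an isometry of the $\boldsymbol t$-parametrization (a signed permutation).

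First I would recall from Section \ref{sec2} that restricting to diagonal $T$ and to $\boldsymbol r=\boldsymbol s=0$, the residual local-unitary freedom consists of the rotations $O_1,O_2$ that keep $T=\mathrm{diag}(\boldsymbol t)$ diagonal; by \eqref{TOTO} these permute the entries $t_{11},t_{22},t_{33}$ and flip pairs of signs (the symmetry group of the tetrahedron $\cal T$). Call this induced action $\boldsymbol t\mapsto \boldsymbol t' = P\boldsymbol t$ with $P$ a signed permutation matrix. Next I would use the defining property of $SU(2)$ coherent states under the group action: $U\lvert\Omega\rangle$ is again a coherent state $\lvert\Omega'\rangle$ whose label $\boldsymbol{\hat n}(\Omega')$ on the Bloch sphere equals $O\,\boldsymbol{\hat n}(\Omega)$, and the measure $\tfrac{1}{4\pi}\sin\theta\,d\theta\,d\phi$ on the sphere is rotation-invariant. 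Applying this to both tensor factors gives
\begin{equation}
Q_{P\boldsymbol t}(\boldsymbol x) = Q_{\boldsymbol t}\big(\Phi^{-1}(\boldsymbol x)\big),
\end{equation}
where $\Phi$ is the change of variables on $\boldsymbol x$-space induced by $(O_1,O_2)$, and $\Phi$ preserves $d\boldsymbol x$. Substituting this into \eqref{gFisher}, changing variables $\boldsymbol x\mapsto\Phi(\boldsymbol x)$, and using $\partial/\partial t'_{ii} = \sum_j (P^{-1})_{ji}\,\partial/\partial t_{jj}$ shows $g^{FR}(\boldsymbol t') = P\, g^{FR}(\boldsymbol t)\, P^{\top}$; since $P$ is a signed permutation and, for the diagonal metric one expects here, this leaves the metric components unchanged, establishing invariance under local rotations.

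The main obstacle will be making the coherent-state covariance step fully rigorous and bookkeeping it correctly: one must check that $U_1\otimes U_2$ acting inside the matrix element in \eqref{p} can genuinely be moved onto the bra–ket labels $\lvert\Omega_1\rangle,\lvert\Omega_2\rangle$ (up to irrelevant phases that cancel between bra and ket), that the resulting reparametrization $\Phi$ of $(\theta_1,\theta_2,\phi_1,\phi_2)$ is a diffeomorphism of the coordinate domain with unit Jacobian against $d\boldsymbol x$, and that the boundary identifications of the coordinate patch are respected. A clean alternative that sidesteps the explicit angle substitution is to argue more abstractly: the Fisher–Rao metric is invariant under any smooth invertible transformation of the random variables preserving the measure, hence $g^{FR}$ depends on $\boldsymbol t$ only through quantities invariant under the coherent-state symmetry group; combined with the known signed-permutation action on $\boldsymbol t$ from \eqref{TOTO}, this yields the claim without computing $\Phi$ explicitly. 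I would present the abstract version as the backbone and relegate the coordinate check to a remark.
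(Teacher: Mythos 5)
Your proposal is correct in its main line, but it reaches the paper's conclusion by a genuinely different route. The paper never touches the phase-space variables: it asserts, via \eqref{TOTO} and \eqref{Correspondence}, that a local unitary acts on the parameters as $\boldsymbol{t}\mapsto\boldsymbol{t}'=O\boldsymbol{t}$ with $O$ orthogonal, and then applies only the chain rule in parameter space to \eqref{gFisher}, arriving at the tensorial law $g^{FR}=O\,\widetilde{g}^{FR}O^{\top}$; this equivariance is what the paper means by ``invariance'' and is exactly what it later uses to conclude that $\int\sqrt{\det g^{FR}}\,d\boldsymbol{t}$ is unchanged, since $\lvert\det O\rvert=1$. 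You instead work on the sample space: you move $U_1\otimes U_2$ onto the coherent-state labels in \eqref{p}, invoke rotation invariance of $d\boldsymbol{x}$, and conclude that the transformed family is the original one composed with a measure-preserving reparametrization $\Phi$ of $\boldsymbol{x}$, whence the same law $g^{FR}(\boldsymbol{t}')=P\,g^{FR}(\boldsymbol{t})P^{\top}$. This buys a justification the paper leaves implicit (why the transformed state's $Q$-function stays in the same parametric family, and why the sample-space part of the action cannot affect the Fisher metric), at the price of restricting to the local rotations that keep $T$ diagonal (signed permutations), whereas the paper states the law for a generic orthogonal $O$. One caution: your closing claim that the components themselves are unchanged because the metric ``one expects here'' is diagonal is neither needed nor justified --- nothing forces $g^{FR}$ to be diagonal (compare the off-diagonal entries of \eqref{gHmatrix}); the equivariance $g^{FR}(P\boldsymbol{t})=P\,g^{FR}(\boldsymbol{t})P^{\top}$ you already derived is the statement actually used for the volume, so you should stop there.
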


\begin{proof}
From Section \ref{sec2}, we know that the equivalence relation $\rho\sim \left(U_1\otimes U_2\right)\rho
\left(U_1^\dag\otimes U_2^\dag\right)$
leads to the transformation \eqref{TOTO} for the parameters matrix,
where the correspondence between $O_i$s and $U_i$s, given by Equation \eqref{UnsU}, can be read as follows
\begin{equation}
\label{Correspondence}
(O_i)_{\mu\nu}=\frac{1}{2} {\rm Tr}\left[\sigma_\mu U_i\sigma_\nu U_i^\dagger\right].
\end{equation}
Therefore, the effect of the equivalence relation acts on $Q_{\boldsymbol t}\left( {\boldsymbol x}\right)$ of Equation \eqref{p} by changing the parameter vector ${\boldsymbol{t}}=(t_{11},t_{22},t_{33})$ via an orthogonal transformation given by \eqref{Correspondence}. That is
\begin{equation}
Q_{\boldsymbol t}( {\boldsymbol x})\mapsto Q_{{\boldsymbol t}^{\prime}}\left( {\boldsymbol x}\right),
\end{equation}
where ${\boldsymbol t}^{\prime}=O {\boldsymbol t}$, or explicitly
\begin{eqnarray}\label{transformation}
&&t^{\prime}_{11}=O_{11}t_{11}+O_{12}t_{22}+O_{13}t_{33}, \nonumber\\
&&t^{\prime}_{22}=O_{21}t_{11}+O_{22}t_{22}+O_{23}t_{33}, \\
&&t^{\prime}_{33}=O_{31}t_{11}+O_{32}t_{22}+O_{33}t_{33}\nonumber .
\end{eqnarray}
Let us now see how the entries of the Fisher--Rao metric \eqref{gFisher} change under an orthogonal transformation of $T$ given in terms of \eqref{Correspondence}. To this end, consider \eqref{gFisher} 
written for ${\boldsymbol t}^{\prime}$ as follows
\begin{eqnarray}\label{gFishervary}
\widetilde{g}^{FR}_{ij}&=&
\int Q_{{\boldsymbol t}^\prime}\left( {\boldsymbol x}\right) \partial^\prime_i \log  Q_{{\boldsymbol t}^\prime}\left( {\boldsymbol x}\right) \partial^\prime_j \log  Q_{{\boldsymbol t}^\prime}\left( {\boldsymbol x}\right) d {\boldsymbol x}\notag\\
&=&\int \frac{1}{Q_{ {\boldsymbol t}^{\prime}} \left( {\boldsymbol x}\right)} \partial^{\prime}_i Q_{ {\boldsymbol t}^{\prime}} \left( {\boldsymbol x}\right) \partial^{\prime}_j 
Q_{ {\boldsymbol t}^{\prime}} \left( {\boldsymbol x}\right) d{\boldsymbol x},
\end{eqnarray}
where $\partial^{\prime}_i:=\frac{\partial}{\partial t^{\prime}_{ii}}$. From \eqref{transformation}, we recognize the functional relation $t^{\prime}_{ii}=t^{\prime}_{ii}(\boldsymbol{t})$; therefore, we have
\begin{align}\label{transfderiv}
\partial_{i}Q_{ {\boldsymbol t}^{\prime}} \left( {\boldsymbol x}\right)
= \partial_1^{\prime} Q_{ {\boldsymbol t}^{\prime}} \left( {\boldsymbol x}\right)\partial_{i}t^{\prime}_{11}
+\partial_2^{\prime}Q_{ {\boldsymbol t}^{\prime}} \left( {\boldsymbol x}\right)\partial_{i}t^{\prime}_{22}
+\partial_3^{\prime}Q_{ {\boldsymbol t}^{\prime}} \left( {\boldsymbol x}\right)\partial_{i}t^{\prime}_{33}
\end{align}
Finally, from \eqref{gFishervary} and \eqref{transfderiv}, we arrive at
\begin{equation}
g^{FR}_{ij}=\sum_{m,n=1}^3 \widetilde{g}^{FR}_{mn}O_{i n}O_{j m},
\end{equation}
and as a consequence we immediately obtain $g^{FR}=O\ \widetilde{g}^{FR}\ O^\top$. 
\end{proof}


\subsection{Quantum Fisher Metrics}\label{subsec32}

It is remarkable that \eqref{gFisher} is the unique (up to a constant factor) monotone Riemannian
metric (that is, the metric contracting under any stochastic map) in the class of probability 
distribution functions (classical or commutative case) \cite{Chentsov}.
This is not the case in an operator setting (quantum case), where the notion
of Fisher information has many natural generalizations due to non-commutativity \cite{Petz}.
Among  the  various  generalizations, two are distinguished. 

The first natural generalization of the classical Fisher information arises when
one formally generalizes the expression \eqref{gFisher}. This was, in fact, first done in a quantum estimation setting \cite{Helstrom}.
To see how this happens, note that in a symmetric form, the derivative of $Q_{\boldsymbol t}$ reads
\begin{equation}
\partial_i Q_{\boldsymbol t}=\frac{1}{2}\left(
\partial_i \log Q_{\boldsymbol t} \cdot Q_{\boldsymbol t}
+Q_{\boldsymbol t} \cdot \partial_i \log Q_{\boldsymbol t}
\right).
\end{equation}
In Equation \eqref{gFisher}, replacing the integration by trace,
$Q_{\boldsymbol t}$  by $\rho_{\boldsymbol t}$, and the logarithmic derivative 
$\partial_i \log Q_{\boldsymbol t}$ 
by the symmetric logarithmic derivative $L_i$ determined by
\begin{equation}
\label{eqSLD}
\partial_i \rho_{\boldsymbol t}=\frac{1}{2}\left(L_i\rho_{\boldsymbol t}+\rho_{\boldsymbol t} L_i\right),
\end{equation}
we come to the quantum Fisher information (derived via the symmetric logarithmic
derivative)
\begin{equation}
\label{gH}
g^{H}_{ij}(\rho_\theta):=\frac{1}{2}{\rm Tr}\left(L_i L_j \rho_{\boldsymbol t}+L_jL_i\rho_{\boldsymbol t}\right).
\end{equation}
Using \eqref{rhotmatrix} to solve \eqref{eqSLD} yields
\begin{eqnarray}
\label{L1}
L_{1}&=&\left(
\begin{array}{cccc}
 \frac{t_{11}-t_{22}}{\left ( t_{11}-t_{22} \right)^2-(t_{33}+1)^2} & 0 & 0 & \frac{t_{33}+1}{-\left ( t_{11}-t_{22} \right)^2+(t_{33}+1)^2} \\
 0 & \frac{t_{11}+t_{22}}{\left ( t_{11}+t_{22} \right)^2-(t_{33}-1)^2} & \frac{t_{33}-1}{\left ( t_{11}+t_{22} \right)^2-(t_{33}-1)^2} & 0 \\
 0 & \frac{t_{33}-1}{\left ( t_{11}+t_{22} \right)^2-(t_{33}-1)^2} & \frac{t_{11}+t_{22}}{\left ( t_{11}+t_{22} \right)^2-(t_{33}-1)^2} & 0 \\
 \frac{t_{33}+1}{-\left ( t_{11}-t_{22} \right)^2+(t_{33}+1)^2} & 0 & 0 & \frac{t_{11}-t_{22}}{\left ( t_{11}-t_{22} \right)^2-(t_{33}+1)^2} \\
\end{array}\right), 
\\ \nonumber \\
\label{L2}
L_{2}&=& \left(
\begin{array}{cccc}
 \frac{t_{22}-t_{11}}{\left ( t_{11}-t_{22} \right)^2-(t_{33}+1)^2} & 0 & 0 & \frac{t_{33}+1}{\left ( t_{11}-t_{22} \right)^2-(t_{33}+1)^2} \\
 0 & \frac{t_{11}+t_{22}}{\left ( t_{11}+t_{22} \right)^2-(t_{33}-1)^2} & \frac{t_{33}-1}{\left ( t_{11}+t_{22} \right)^2-(t_{33}-1)^2} & 0 \\
 0 & \frac{t_{33}-1}{\left ( t_{11}+t_{22} \right)^2-(t_{33}-1)^2} & \frac{t_{11}+t_{22}}{\left ( t_{11}+t_{22} \right)^2-(t_{33}-1)^2} & 0 \\
 \frac{t_{33}+1}{\left ( t_{11}-t_{22} \right)^2-(t_{33}+1)^2} & 0 & 0 & \frac{t_{22}-t_{11}}{\left ( t_{11}-t_{22} \right)^2-(t_{33}+1)^2} \\
\end{array}
\right),
\\ \nonumber \\
\label{L3}
L_{3}&=& \left(
\begin{array}{cccc}
 \frac{t_{33}+1}{-\left ( t_{11}-t_{22} \right)^2+(t_{33}+1)^2} & 0 & 0 & \frac{t_{11}-t_{22}}{\left ( t_{11}-t_{22} \right)^2-(t_{33}+1)^2} \\
 0 & \frac{1-t_{33}}{\left ( t_{11}+t_{22} \right)^2-(t_{33}-1)^2} & -\frac{t_{11}+t_{22}}{\left ( t_{11}+t_{22} \right)^2-(t_{33}-1)^2} & 0 \\
 0 & -\frac{t_{11}+t_{22}}{\left ( t_{11}+t_{22} \right)^2-(t_{33}-1)^2} & \frac{1-t_{33}}{\left ( t_{11}+t_{22} \right)^2-(t_{33}-1)^2} & 0 \\
 \frac{t_{11}-t_{22}}{\left ( t_{11}-t_{22} \right)^2-(t_{33}+1)^2} & 0 & 0 & \frac{t_{33}+1}{-\left ( t_{11}-t_{22} \right)^2+(t_{33}+1)^2} \\
\end{array}
\right){\blu .} \;\;
\end{eqnarray}
Then, from Equation \eqref{gH}, we obtain
{\small
\begin{equation}
\label{gHmatrix}
g^{H}=\frac{1}{\Delta}\left(
\begin{array}{ccc}
 1- \|{\boldsymbol t}\|^2 -2 t_{11} t_{22}  t_{33} 
 & (1+ \| {\boldsymbol t}\|^2 -2 t_{33}^2) t_{33}+2  t_{11}  t_{22} 
 & \left(1+ \| {\boldsymbol t} \|^2- 2t_{22}^2\right)  t_{22}+2  t_{11}  t_{33} \\ \\
(1+ \| {\boldsymbol t}\|^2 -2 t_{33}^2) t_{33}+2  t_{11}  t_{22}  & 
 1- \|{\boldsymbol t}\|^2 -2 t_{11} t_{22}  t_{33}
 &\left(1+ \| {\boldsymbol t} \|^2-2 t_{11}^2\right)  t_{11}+2  t_{22}  t_{33} \\ \\
 \left(1+ \| {\boldsymbol t} \|^2- 2t_{22}^2\right)  t_{22}+2  t_{11}  t_{33}
&\left(1+ \| {\boldsymbol t} \|^2-2 t_{11}^2\right)  t_{11}+2  t_{22}  t_{33}
&  1- \|{\boldsymbol t}\|^2 -2 t_{11} t_{22}  t_{33}  \\
\end{array}
\right),
\end{equation}
}
where 
\begin{equation}
\Delta:=\left(\left ( t_{11}+t_{22} \right)^2-(1- t_{33})^2\right) \left(\left ( t_{11}-t_{22} \right)^2-(1+ t_{33})^2\right).
\end{equation}
It follows 
\begin{equation}
\det g^{H}=\frac{1}{\Delta}.
\end{equation}

\bigskip

The second generalization of the classical Fisher information arises 
by noticing that \eqref{gFisher} can be equivalently expressed as 
\begin{equation}
g^{FR}_{ij}=4\int  \partial_i \sqrt{Q_{\boldsymbol t}\left( {\boldsymbol x}\right)}
\partial_j \sqrt{ Q_{\boldsymbol t}\left( {\boldsymbol x}\right)} \, d {\boldsymbol x}.
\end{equation}
Replacing the integration by trace, and the parameterized probabilities $Q$
by a parameterized density operator $\rho$, we can arrive at \cite{Luo}
\begin{equation}\label{gWY}
g^{WY}_{ij}:=4 {\rm Tr} \left[ \left(\partial_i  \sqrt{\rho_{\boldsymbol t}}\right) 
\left(\partial_j  \sqrt{\rho_{\boldsymbol t}}\right)\right].
\end{equation} 
The superscript indicates the names of Wigner and Yanase since Equation \eqref{gWY} is motivated by their work on skew information \cite{WigYan}.

The following proposition relates $g^{WY}$ and $g^{H}$.

\begin{prop}\label{prop2}
If $\left[\rho_{\boldsymbol t},L_i\right]=0$, $\forall i$, then $g^{WY}=g^{H}$.
\end{prop}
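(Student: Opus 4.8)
The plan is to use the commutation hypothesis to collapse both $g^{WY}$ and $g^{H}$ onto the single trace expression $\mathrm{Tr}(L_iL_j\rho_{\boldsymbol t})$, and then read off the equality. First I would note that $[\rho_{\boldsymbol t},L_i]=0$ turns the defining relation \eqref{eqSLD} into $\partial_i\rho_{\boldsymbol t}=L_i\rho_{\boldsymbol t}=\rho_{\boldsymbol t}L_i$, and that commuting with $\rho_{\boldsymbol t}$ forces commuting with its positive square root, i.e. $[\sqrt{\rho_{\boldsymbol t}},L_i]=0$ (by the spectral theorem, since $\sqrt{\rho_{\boldsymbol t}}$ is a function of $\rho_{\boldsymbol t}$). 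Here I implicitly restrict to the support of $\rho_{\boldsymbol t}$, which is harmless.

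Next I would identify $\partial_i\sqrt{\rho_{\boldsymbol t}}$. Setting $M_i:=\tfrac12 L_i\sqrt{\rho_{\boldsymbol t}}$ — which is Hermitian because $L_i$ is Hermitian and commutes with $\sqrt{\rho_{\boldsymbol t}}$ — a one-line computation using $[\sqrt{\rho_{\boldsymbol t}},L_i]=0$ gives $M_i\sqrt{\rho_{\boldsymbol t}}+\sqrt{\rho_{\boldsymbol t}}M_i=L_i\rho_{\boldsymbol t}=\partial_i\rho_{\boldsymbol t}$, whereas differentiating $\rho_{\boldsymbol t}=\sqrt{\rho_{\boldsymbol t}}\,\sqrt{\rho_{\boldsymbol t}}$ gives $(\partial_i\sqrt{\rho_{\boldsymbol t}})\sqrt{\rho_{\boldsymbol t}}+\sqrt{\rho_{\boldsymbol t}}(\partial_i\sqrt{\rho_{\boldsymbol t}})=\partial_i\rho_{\boldsymbol t}$. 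Since $\sqrt{\rho_{\boldsymbol t}}$ is positive definite on its support, the Lyapunov-type map $X\mapsto X\sqrt{\rho_{\boldsymbol t}}+\sqrt{\rho_{\boldsymbol t}}X$ is invertible, so the two equations yield $\partial_i\sqrt{\rho_{\boldsymbol t}}=M_i=\tfrac12 L_i\sqrt{\rho_{\boldsymbol t}}=\tfrac12\sqrt{\rho_{\boldsymbol t}}L_i$.

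Then I would substitute into \eqref{gWY}: $g^{WY}_{ij}=4\,\mathrm{Tr}\!\big[(\tfrac12 L_i\sqrt{\rho_{\boldsymbol t}})(\tfrac12 L_j\sqrt{\rho_{\boldsymbol t}})\big]=\mathrm{Tr}[L_i\sqrt{\rho_{\boldsymbol t}}L_j\sqrt{\rho_{\boldsymbol t}}]=\mathrm{Tr}[L_iL_j\rho_{\boldsymbol t}]$, the last step using $[\sqrt{\rho_{\boldsymbol t}},L_j]=0$ and cyclicity of the trace. For $g^H$ I would use $[\rho_{\boldsymbol t},L_i]=0$ and cyclicity to write $\mathrm{Tr}[L_jL_i\rho_{\boldsymbol t}]=\mathrm{Tr}[L_j\rho_{\boldsymbol t}L_i]=\mathrm{Tr}[L_iL_j\rho_{\boldsymbol t}]$, so that \eqref{gH} reduces to $g^H_{ij}=\mathrm{Tr}[L_iL_j\rho_{\boldsymbol t}]$. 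Comparing the two expressions gives $g^{WY}=g^{H}$.

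The only non-bookkeeping point — and hence the step I expect to need the most care — is the identity $\partial_i\sqrt{\rho_{\boldsymbol t}}=\tfrac12 L_i\sqrt{\rho_{\boldsymbol t}}$, i.e. invoking uniqueness of the solution of the Lyapunov equation and, relatedly, justifying that one may work on the support of $\rho_{\boldsymbol t}$ at the rank-deficient boundary of the tetrahedron $\mathcal T$. Everything after that is immediate from the commutation relation and trace cyclicity.
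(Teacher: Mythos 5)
Your proof is correct and takes essentially the same route as the paper: both arguments use the hypothesis to get $\partial_i\rho_{\boldsymbol t}=\rho_{\boldsymbol t}L_i$ and the commutation of $L_i$ with $\sqrt{\rho_{\boldsymbol t}}$, and then collapse both $g^{WY}_{ij}$ and $g^{H}_{ij}$ to the common expression ${\rm Tr}\left[\rho_{\boldsymbol t}L_iL_j\right]$. The only difference is that you explicitly justify the key identity $\partial_i\sqrt{\rho_{\boldsymbol t}}=\tfrac12 L_i\sqrt{\rho_{\boldsymbol t}}$ via uniqueness of the Lyapunov equation, a step the paper takes for granted when it rewrites the definition \eqref{gWY} as ${\rm Tr}\left[\rho_{\boldsymbol t}^{-1/2}\left(\partial_i\rho_{\boldsymbol t}\right)\rho_{\boldsymbol t}^{-1/2}\left(\partial_j\rho_{\boldsymbol t}\right)\right]$.
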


\begin{proof}
If $\left[\rho_{\boldsymbol t},L_i\right]=0$, $i=1,2,3$ from \eqref{eqSLD} it follows
$\partial_i \rho_{\boldsymbol t}=\rho_{\boldsymbol t} L_i$ and $g_{ij}^{H}={\rm Tr} 
\left[\rho_{\boldsymbol t} L_iL_j\right]$. It is also 
\begin{eqnarray}
g_{ij}^{ WY}&=&{\rm Tr} \left[ {\rho_{\boldsymbol t}}^{-1/2} \left(\partial _i {\rho_{\boldsymbol t}}\right)
{\rho_{\boldsymbol t}}^{-1/2}\left(\partial _j {\rho_{\boldsymbol t}}\right)
\right]\label{line1}\\
&=&{\rm Tr} \left[ {\rho_{\boldsymbol t}}^{-1/2}  {\rho_{\boldsymbol t}} L_i
{\rho_{\boldsymbol t}}^{-1/2} {\rho_{\boldsymbol t}} L_j
\right]\label{line2}\\
&=&{\rm Tr} \left[ {\rho_{\boldsymbol t}}^{1/2}  L_i
{\rho_{\boldsymbol t}}^{1/2} L_j
\right]\label{line3}\\
&=&{\rm Tr} \left[ {\rho_{\boldsymbol t}}  L_i  L_j
\right]=g_{ij}^{H},
\label{line4}
\end{eqnarray}
where in \eqref{line1} we have used the definition \eqref{gWY}; in \eqref{line2} we have used the result 
$\partial_i \rho_{\boldsymbol t}=\rho_{\boldsymbol t} L_i$; and in \eqref{line4} the commutativity of $\rho_{\boldsymbol t}$ and $L_i$s.
\end{proof}

\medskip

By referring to \eqref{rhotmatrix} and \eqref{L1}--\eqref{L3}, we can see that 
the conditions of Proposition \ref{prop2} are satisfied, hence hereafter we shall consider
$g^{WY}=g^{H}$.


\section{Volume of States with Maximally Disordered Subsystems}\label{sec4}

The volume of two-qubit states with maximally disordered subsystems will be given by
\begin{equation}
\label{V}
V=\int_{\cal T} \sqrt{ \det g} \; d{\boldsymbol t},
\end{equation}
where the tetrahedron $\cal T$ is characterized by  equations:
\begin{eqnarray}
1-t_{11}-t_{22}-t_{33} &\geq& 0, \notag \\
1-t_{11}+t_{22}+t_{33} &\geq& 0,\notag \\
1+t_{11}-t_{22}+t_{33} &\geq& 0,\notag \\
1+t_{11}+t_{22}-t_{33} &\geq& 0. 
\end{eqnarray}
{(If the Fisher metric is degenerate, then the volume is meaningless, given that $\det g = 0$. This reflects on the parameters describing the states. Indeed, at least one of them should be considered as depending on the others. Thus, one should restrict the attention to a proper submanifold.)}

Analogously, the volume of two-qubit separable states with maximally disordered subsystems
will be given by
\begin{equation}
\label{Vs}
V_s=\int_{\cal O} \sqrt{ \det g} \; d{\boldsymbol t},
\end{equation}
where the octahedron $\cal O$ is characterized by  equations:
\begin{eqnarray}
1-t_{11}-t_{22}-t_{33} &\geq& 0, \notag\\
1+t_{11}-t_{22}-t_{33} &\geq& 0,\notag \\
1+t_{11}+t_{22}-t_{33} &\geq& 0,\notag \\
1-t_{11}+t_{22}-t_{33} &\geq& 0, \notag\\
1-t_{11}-t_{22}+t_{33} &\geq& 0, \notag \\
1+t_{11}-t_{22}+t_{33} &\geq& 0,\notag \\
1+t_{11}+t_{22}+t_{33} &\geq& 0,\notag \\
1-t_{11}+t_{22}+t_{33} &\geq& 0. 
\end{eqnarray}

Concerning the classical Fisher metric,
as a consequence of the result $g^{FR}=O\ \widetilde{g}^{FR} O^\top$ (Proposition \ref{prop1}), we have that the volume computed as $\int \sqrt{\det g^{FR}}\ d{\boldsymbol{t}}$ is invariant under orthogonal transformations of the parameters matrix $T$.

However, the integral \eqref{gFisher} can only be performed numerically.
To this end, we have generated $10^3$ points randomly distributed inside the tetrahedron 
${\cal T}$. On each of these points, the integral \eqref{gFisher} has been numerically evaluated, hence the value of the function $\sqrt{\det g^{FR}}$ determined on a set of discrete points.
After that, data has been interpolated and a smooth function $\sqrt{\det g^{FR}}(t_{11},t_{22},t_{33})$ obtained.
This has been used to compute the quantities \eqref{V}, \eqref{Vs} and their ratio, resulting as 
$V=0.168$, $V_s=0.055$, $V_s/V=0.327$.


The same volumes computed by the quantum Fisher metric 
result in $V=\pi^2$, $V_s=(4-\pi)\pi$ and their ratio is $V_s/V=(4-\pi)/\pi\approx 0.27$. 


\bigskip

It would also be instructive to see how the ratio of the volume of separable states and the volume of total states varies versus the purity.
This latter quantity turns out to be
\begin{equation}
P= {\rm Tr} \left(\rho^2\right)=\frac{1}{4}\left( 1+ \|{\boldsymbol t}\|^2 \right)
\end{equation}
Thus, fixing a value of $P\in[\frac{1}{4},1]$ amounts to fixing a sphere $\cal S$ centered in the origin of 
$\mathbb{R}^3$ and with a radius $\sqrt{4P-1}$.
Hence, the volume of states with fixed purity will be given by  
\begin{equation}
V(P)=\int_{{\cal T}\cap{\cal S}} \sqrt{ \det g} \; d{\boldsymbol t},
\label{Vpurity}
\end{equation}
while the volume of separable states with fixed purity will be given by  
\begin{equation}
V_s(P)=\int_{{\cal O}\cap{\cal S}} \sqrt{ \det g} \; d{\boldsymbol t}.
\label{Vspurity}
\end{equation}
As a consequence, we can obtain the ratio
\begin{equation}
R(P)=\frac{V_{s}(P)}{V(P)},
\label{RP}
\end{equation}
as a function of purity $P$.
Such a ratio is plotted in Figure \ref{fig1} as a solid (resp. dashed) line for the classical (resp. quantum) Fisher metric. There, it is shown that by increasing the purity, the volume of separable states diminishes until it becomes a null measure. In fact, this happens already at purity $P=1/2$.
However, when the purity is low enough (below $1/3$), all states are separable ($R=1$). 
Above all, the two curves show the same qualitative behavior.
This means that the usage of classical Fisher information on the phase space representation of quantum states (as a true probability distribution function) is able to capture the main feature of the geometry of quantum states.

\begin{figure}[h]
\centering
\includegraphics[width=8cm]{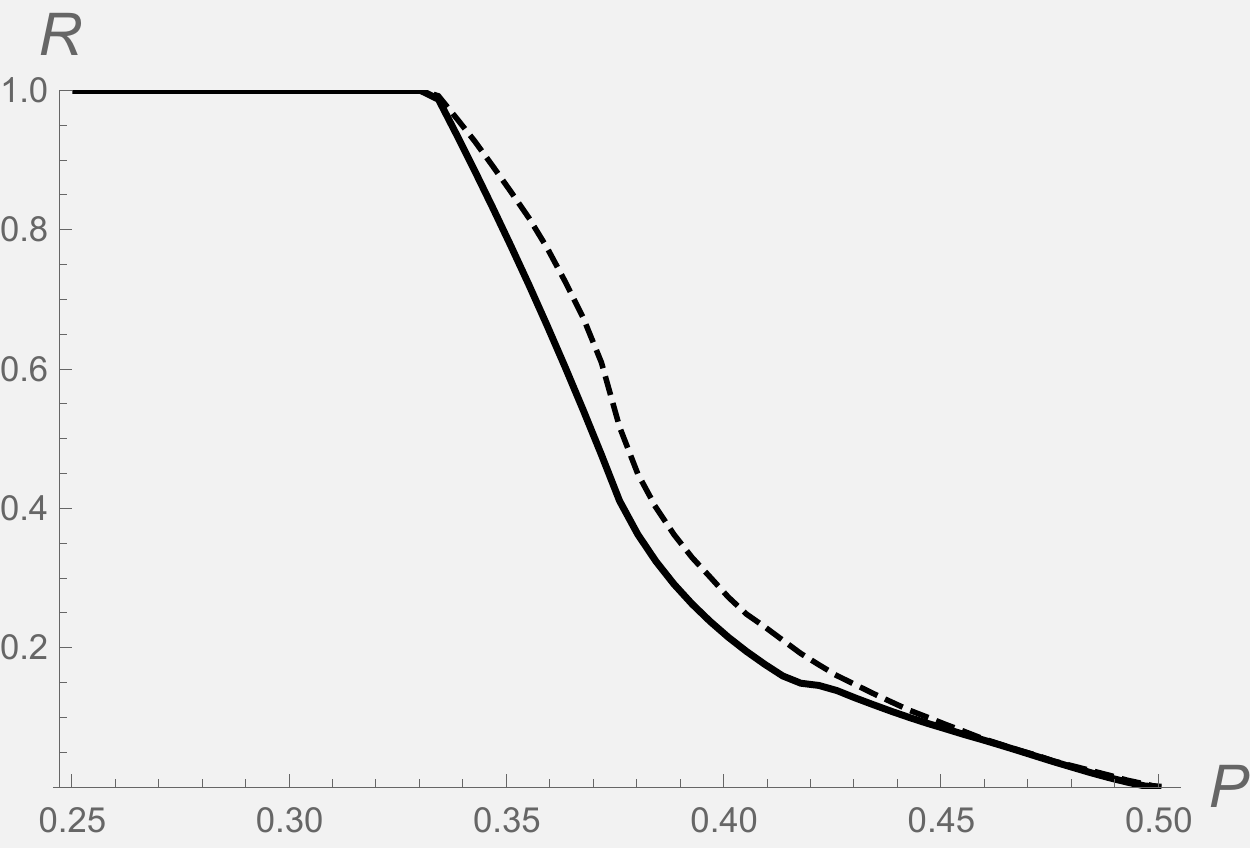}
\caption{Ratio of volumes \eqref{RP} vs purity $P$. Solid line refers to the classical Fisher metric; dashed line refers to the quantum Fisher metric.}
\label{fig1}
\end{figure}


\section{Conclusion}\label{sec5}

In conclusion, we have investigated the volume of the set of two-qubit states with maximally disordered subsystems by considering their phase space representation in terms of probability distribution functions and by applying the classical Fisher information metric to them.
The results have been contrasted with those obtained by using quantum versions of the Fisher metric.
Although the absolute values of volumes of separable and entangled states turn out to be different in the two approaches, their ratios are comparable. 
Above all, the behavior of the volume of sub-manifolds of separable and entangled states with fixed purity is shown to be almost the same in the two approaches.

Thus, we can conclude that classical Fisher information in phase space is able to capture the features of the volume of quantum states. The question then arise as to which aspects such an approach will be unable to single out with respect to the purely quantum one. Besides that, our work points out other interesting issues
within the quantum metrics. In fact, in the considered class of two-qubit states, the Helstrom and the 
Wigner--Yanase-like quantum Fisher metrics coincide. This leads us to ask the following questions: For which more general class of states in finite dimensional systems are the two equal? When they are not equal, 
what is their order relation? These investigations are left for future works.

Additionally, it is worth noting that 
our approach to the volume of states by information geometry offers the possibility to characterize quantum logical gates. 
In fact, given a logical gate (unitary) $G$ on two-qubit states, 
the standard entangling power is defined as \cite{ZZF00}:
$$
\mathfrak{E}(G):=\int E\left( G |\psi \rangle \right)
d\mu\left( |\psi\rangle\right),
$$
where $E$ is an entanglement quantifier \cite{Horo} and the overall average is the
product states $|\psi\rangle=|\psi_1\rangle \otimes |\psi_2\rangle$
according to a suitable measure $\mu\left( |\psi\rangle\right)$.
It is then quite natural to take the measure induced by the Haar measure on 
$SU(2)\otimes  SU(2)$ \cite{NTM16}.
However, $\mathfrak{E}(G)$ is not usable on the subset of states with maximally disordered subsystems as it is restricted to pure states.
Here, our measure comes into play which leads to the following: 
$$
\mathfrak{E}(G):=\int_{\cal O} E\left( G \rho_{\boldsymbol t} \, G^\dag \right) \sqrt{\det g} \,
d\boldsymbol{t}.
$$
In turn, this paves the way to a general formulation that involves the average overall separable states by also including parameters $\boldsymbol{r},\boldsymbol{s}$. Clearly, this would provide the most accurate characterization of the entangling potentialities of $G$. 

Finally, our approach can be scaled up to three or more qubit, but since analytical calculations will soon become involved, one should consider families of states with a low number of parameters, e.g., those proposed in \cite{Altafini}. Nevertheless, such families can provide geometrical insights for more general cases.


\acknowledgments{The work of M.R. is supported by China Scholarship Council.}


\end{document}